\def\ps@headings{%

\def\@oddfoot{}%
\def\@evenfoot{}}
\makeatother \pagestyle{headings}
\def\bi{\begin{itemize}}
\def\ei{\end{itemize}}
\def\bequ{\begin{equation}}
\def\eequ{\end{equation}}
\def\benum{\begin{enumerate}}
\def\eenum{\end{enumerate}}
\begin{document}

\title{Optimal Solution for the Index Coding Problem Using Network Coding over GF(2)}
\author{Jalaluddin Qureshi, Chuan Heng Foh and Jianfei Cai  \\
School of Computer Engineering\\
Nanyang Technological University, Singapore\\
jala0001@e.ntu.edu.sg}

\maketitle

\begin{abstract}
The index coding problem is a fundamental transmission problem which
occurs in a wide range of multicast networks. Network coding over a
large finite field size has been shown to be a theoretically
efficient solution to the index coding problem. However the high
computational complexity of packet encoding and decoding over a
large finite field size, and its subsequent penalty on encoding and
decoding throughput and higher energy cost makes it unsuitable for
practical implementation in processor and energy constraint devices
like mobile phones and wireless sensors. While network coding over
GF(2) can alleviate these concerns, it comes at a tradeoff cost of
degrading throughput performance. To address this tradeoff, we
propose a throughput optimal triangular network coding scheme over
GF(2). We show that such a coding scheme can supply unlimited number
of innovative packets and the decoding involves the simple back
substitution. Such a coding scheme provides an efficient solution to
the index coding problem and its lower computation and energy cost
makes it suitable for practical implementation on devices with
limited processing and energy capacity.
\end{abstract}

\section{Introduction}

The index coding problem~\cite{Yossef06,Chaudhry08} is an instance
of packets $P = \{c_1, c_2, \ldots, c_M\}$ transmission problem over
a noiseless channel by a transmitter to multiple receivers $R =
\{R_1, R_2, ..., R_N\}$, given the subset of packets each receiver
already \emph{has} $H(R_i)\subseteq P$, and the disjoint subset of
packets each receiver \emph{wants} $W(R_i)\subseteq P$, such that
the total number of transmissions is minimized, as shown in
Fig.~\ref{fig:multicast}. It is intuitive to see that the index
coding problem can also be extended for erasure channel, where the
transmitter solves the index coding problem multiple times until all
the receivers have received the packets in their want set. An
efficient solution to the index coding problem has several
applications, for instance the index coding problem has been shown
to occur in content distribution networks~\cite{Chaudhry08},
satellite communication networks~\cite{Chaudhry08}, wireless
routing~\cite{Katti06} and wireless
multicasting~\cite{Sagduyu07,KChi10,Heide08} amongst many other.
Network coding~\cite{Katti06,Sagduyu07} has been proposed as an
efficient solution for the index coding problem. In network coding,
the transmitter generates a coded packet by linearly mapping packets
$c_m\in P$, with coefficients $g_m$, from a finite field $GF(2^q)$,
$q\in \mathbb{N}_1$, which is then transmitted to the receivers. As
we will show later, minimizing the total number of transmissions
requires that the transmitted coded packet is linearly independent
from the set of packets $H(R_i)$ for the maximum possible number of
receivers, ideally for all unsaturated receivers.

To illustrate how network coding can be beneficially applicable for
the index coding problem, consider for example a wireless network
where $R_0$ is multicasting packets $c_1$ and $c_2$ to $R_1$ and
$R_2$. However, $R_1$ receives $c_1$ but not $c_2$, whereas $R_2$
receives $c_2$ but not $c_1$. In this case, rather than
retransmitting packet $c_1$ and $c_2$ in 2 different time slots, it
is possible for the transmitter to encode the packets $c_1\oplus
c_2$ over $GF(2)$, and transmit the encoded packet in 1 time slot.
On receiving the encoded packet both the receiver can recover the
lost packet by decoding the original packet with the encoded packet.
This therefore reduces the total number of transmissions and hence
improves the network bandwidth.

\begin{figure}
\begin{center}
\includegraphics[width = 0.5\textwidth]{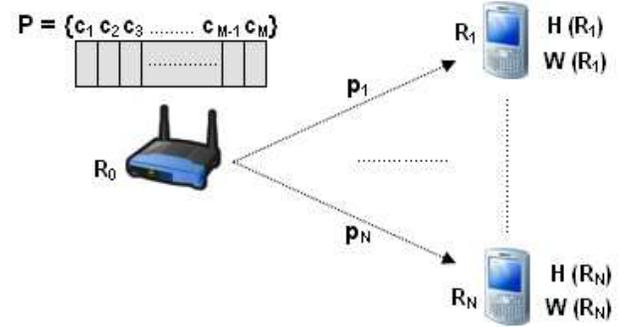}
\end{center}
\caption{Application of the index coding problem. Wireless router
multicasting a file stream to $N$ receivers.} \label{fig:multicast}
\end{figure}

However the bandwidth performance of network coding for a class of
multicast network with a packet batch size of $M$, being transmitted
to $N$ receivers, and a Bernoulli packet loss probability of $p_i$
for receiver $R_i$, has been shown to vary with the size of the
finite field $GF(2^q)$ over which packet coding is
performed~\cite{Sagduyu07,KChi10}. For random linear network coding
(RLNC), where the encoding coefficient is randomly selected from
$GF(2^q)$, the probability that the encoded packet $y_a$ is linearly
independent of all the previously received packets increases
logarithmically with respect to $q$ and is bounded by an
asymptote~\cite{Sagduyu07}. For a deterministic linear network
coding (DLNC) scheme where the encoding coefficient $g_m$, is
selected deterministically by a polynomial-time algorithm, to
guarantee that an innovative packet\footnote{An innovative packet
with respect to a specified receiver, is defined as a packet which
the receiver can't generate from the set of packets it already has.}
is transmitted at every transmission, it has been theoretically
shown that the size of the finite field from which the encoding
coefficients $g_m$, are selected is bounded as $q\geq \lceil \log_2
N\rceil$~\cite{KChi10,Kwan11}. Clearly network coding over $GF(2)$
can only guarantee an optimal solution for a network with 2
receivers. It has also been shown that the solution for the index
coding problem using network coding over $GF(2)$ is a NP-complete
problem~\cite{Rouayheb07,Kwan11}. In network coding literature,
encoding over $GF(2)$ is referred as XOR coding, whereas encoding
over $GF(2^q)$ for $q>1$ is referred to as linear network coding
(LNC), we will henceforth use these terms to distinguish between the
finite field size from which the coding coefficients are selected.

While LNC over a large field size $GF(2^q)$ has been proposed as a
theoretically viable solution to the index coding problem, such
benefits do not come without a tradeoff. The most significant
drawback of using LNC is the high computational cost of performing
packet encoding and decoding~\cite{Heide11}. Unlike XOR coding and
decoding, where the main mathematical operation is XOR addition, LNC
coding involves multiplication and addition whereas LNC decoding
involves multiplication and Gaussian elimination. LNC has packet
encoding complexity of $\mathcal {O} (MB)$~\cite{Heide08}, where $B$
is the length of the data packet. For RLNC the complexity of
generating $M$ random coefficients is given as $\mathcal {O} (MG)$,
where $G$ is the constant complexity of generating a random
number~\cite{Heide08}. For DLNC the complexity of deterministically
generating $M$ coding coefficients is given as $\mathcal {O}
(N^2M^3)$~\cite{KChi10}. The decoding computational complexity of
decoding $M$ LNC encoded packets is given as $\mathcal
{O}(M^2B+M^3)$~\cite{Heide08} for both RLNC and DLNC, and hence the
decoding complexity per coded packet is given as $\mathcal
{O}(MB+M^2)$.

The high energy cost arising because of LNC computational complexity
makes LNC unsuitable for practical implementation in battery
constrained devices like mobile phones and wireless sensors to solve
the index coding problem. Mobile phone batteries, like wireless
sensor, suffer from severe energy limitation, and energy
optimization for various smartphone applications is becoming an
important parameter for designing smartphone applications which can
sustain a longer battery life~\cite{Vallina10,Fehske11}. Using
higher capacity battery can't be regarded as a viable alternate
approach, as higher power consumption results in the overheating of
the devices. Experimental evaluation of RLNC over $GF(2^8)$ for iPod
Touch 2G has shown that packet encoding and decoding can account for
up to $33\%$ of the battery's energy consumption~\cite{Shojania09}.
Whereas it has been shown that XOR-encoding of 2 packets, each 1000
bytes long only consumes 191 nJ~\cite{Vingelmann09} of energy. Given
that transmission of a packet of the same length over IEEE 802.11
network on Nokia N95 consumes 2.31 mJ~\cite{Vingelmann09} of energy,
the overall energy cost of XOR-coding has no apparent effect on the
total energy cost of encoding and transmitting a XOR coded packet.
With energy efficiency becoming an increasingly important concern
for communication networks~\cite{Fehske11}, even if the solution for
the index coding problem using LNC is to be limited for desktop
computers and access points (AP), which can afford high energy cost,
such an approach would not be green communication feasible because
the higher energy cost of LNC corresponds to a higher carbon
footprint penalty.

LNC also suffers from low encoding and decoding throughput. The
computational penalty cost of LNC on encoding and decoding
throughput has also been practically demonstrated on a
testbed~\cite{Vingelmann10}. In this work the authors show that in
general, encoding over $GF(2)$ is approximately 8 times faster than
encoding over $GF(2^8)$ on iPhone 3G implementation. Similarly
decoding over $GF(2)$ is approximately 6 times faster than decoding
over $GF(2^8)$ on the same testbed.

From the previous discussion it is apparent that a viable solution
to the index coding problem is to use a coding scheme which can
deliver the bandwidth performance of LNC, while affording the
computation cost of XOR coding. To address this gap, in this paper
we propose a triangular pattern based packet coding scheme, where
packets are encoded over $GF(2)$, and decoding is done using the
simple back substitution scheme rather than Gaussian elimination
method. Further such coding scheme can guarantee enough `pool' on
linearly independent coded packet to deliver optimal throughput
performance. Such optimization tradeoff between energy cost and
bandwidth performance has also garnered interest
recently~\cite{Heide11}. In addition unlike deterministic coding
algorithms such as the traditional XOR-coding
schemes~\cite{Katti06,Rozner07} and DLNC polynomial-time
algorithm~\cite{KChi10} which require packet feedback information
for every transmitted packet from all the receivers, the performance
of our proposed coding scheme like RLNC is independent from the
constraint of having packet feedback information.

Our paper is organized as follow. We first highlight the current
research directions to reduce the energy cost of LNC and increase
its encoding and decoding throughput in Section~\ref{sec:related}. A
formal problem formulation and system parameters are stated in
Section~\ref{sec:problem}. We then propose our proposed coding
scheme in Section~\ref{sect:proposed}, and the evaluation of the
triangular coding throughput performance, packet overhead and
computational complexity along with comparison with current network
coding schemes in Section~\ref{sec:performance}. Numerical results
of the packet overhead of our coding scheme and its comparison with
the packet overhead of other network coding scheme are given in
Section~\ref{sect:results}. Finally we conclude with the main
contributions and results of our paper in
Section~\ref{sect:conclusion}.

\section{Related Work}\label{sec:related}
Current research direction to reduce the computation cost of LNC,
and increase the encoding and decoding throughput falls under three
broad categories. The first is the use of Gauss-Jordan elimination
method running on parallel multi-processors system as shown
in~\cite{Shojania07,Shojania09,Park10} to increase the decoding
throughput. This is based on the well-known computer science
principle that even though the Gauss-Jordan elimination method
requires more computation steps relative to Gaussian elimination,
Gauss-Jordan elimination method can nonetheless speedup the
processing time required to solve a matrix of fixed size as the
number of processors increases~\cite{Darmohray87}. This is explained
due to the better load balancing characteristics and lower
synchronization cost of the Gauss-Jordan elimination method. However
such method comes at the tradeoff cost of increasing number of
processors requirement, and higher energy cost. Even though both
Gaussian elimination and Gauss-Jordan elimination have the same
computational complexity \emph{order}, Gauss-Jordan elimination
requires more number of computation steps. A parallel Gauss-Jordan
algorithm for multi-processor system is shown to require
approximately $50\%$ more operations than Gaussian
elimination~\cite{Darmohray87}.

The second major research direction to reduce the decoding
computation cost of LNC is to use sparse coding coefficients. A
sparse matrix can loosely be described as a matrix with more number
of `0's. In~\cite{Kwan11,Sung11} it has been shown that when the
size of the coding field is bounded as $q\geq \lceil \log_2
N\rceil$, there always exists a set of coding coefficient such that
the coded packet is linearly independent from the set of packets
each of the $N$ receivers have. However, the SPARSITY
problem~\cite{Sung11} of finding such set of $M$ coding coefficients
of which $M-\omega$ coding coefficients are `0' and the coding
coefficient is linearly independent from the set of packets received
by all the receivers, is an NP-complete problem with respect to $N$.
By assuming fixed $N$, the computational complexity of solving a
$M\times M$ matrix of rank $M$ by the receiver reduces from
$\mathcal {O}(M^3)$ to $\mathcal {O} (M^2\omega)$, where $\omega\leq
M$. Unfortunately such reduction in the complexity of Gaussian
elimination come at the tradeoff cost of solving the SPARSITY
problem by the transmitter which has complexity given as $\mathcal
{O}(M^N(MN^2))$~\cite{Sung11}.

The third major approach to reduce the computation cost is to use
the trivial approach of using smaller packet batch size
(see~\cite{Li11} and references therein). However such an approach
of decreasing the packet batch size comes at the tradeoff cost of
decreasing throughput performance~\cite{Rozner07,Sagduyu07,Heide09}.

\section{Problem Statement}\label{sec:problem}
Similar to~\cite{Sagduyu07,Rouayheb07,Rozner07} we consider the
problem of a wireless transmitter $R_0$ multicasting $M$ packets to
$N$ receivers. $M$ is called the packet batch size. The \emph{iid}
Bernoulli packet loss at each receiver $R_i$ is given as $p_i$. At
the start of the transmission, $W(R_i)=P, \forall i.$ After each
transmission, every receiver updates it \emph{want} and \emph{has}
set, based on the packet it has received. $R_0$ continues
transmitting packets until $|H(R_i)|=|P|$, $\forall i$. Therefore
our considered problem of multicasting $M$ packets to $N$ receivers
over an erasure channel is a general case of the index coding
problem. Let $T$ denote the number of transmissions necessary before
all the $N$ receivers receive $M$ innovative packets. The problem
statement which we are interested to solve can be written as an
optimization problem

\begin{equation}\label{eq:optimisation0}
\text{minimise } T
\end{equation}

subject to

\begin{equation}\label{eq:subject}
|H(R_i)| = |P|, \forall i.
\end{equation}

In this paper we propose a triangular packet coding scheme which
solves \eqref{eq:optimisation0} optimally, with encoding and
decoding computational complexity of the same order as that of
traditional XOR packet encoding and decoding.

Therefore the main contribution of our paper is twofold, first we
address the energy cost and bandwidth performance tradeoff
associated with network coding~\cite{Heide11} in this paper.
Secondly, from the information theory perspective, we show that
unlike previous works which have concluded that the optimal solution
for the index coding problem using over $GF(2)$ is
NP-complete~\cite{Kwan11,KChi10,Rouayheb07,Rozner07}, it is possible
to obtain an optimal solution for wireless multicasting using XOR
coding when the constraint of adding redundant bits to packets is
relaxed.

\subsection{Performance Bound}
\newtheorem{theorem}{Theorem}
\newtheorem{lemma}{Lemma}
\newtheorem{axiom}{Axiom}

The main performance measure for this work is the number of
transmissions required to transmit $M$ innovative packets to $N$
memory-based receivers. The performance bound can be derived when an
optimal network coding scheme is considered, where every packet
transmission is an innovative transmission. Similar
to~\cite{KChi10,Nguyen09}, when packet reception is characterized by
the binomial probability law, with nonhomogeneous packet loss
probabilities, the average number of transmissions required to
transmit $M$ innovative packets to $N$ memory-based receivers is
\[
\mathcal{L} = \sum_{n=0}^{\infty} \left\{ 1 - \prod_{j=1}^{N} \left(
\sum_{i=M}^{n} {n \choose i} (1-p_j)^{i} p_j^{n-i} \right) \right\}
\]
where $p_j$ is the packet loss probability for receiver $R_j$.

We may approximate the above result by only including those whose
packet loss are high. With this approximation, the above result can
be rewritten as
\[
\mathcal {G}(p,k,M) = \sum_{n=0}^{\infty} \Biggr\{ 1 - \left(
\sum_{i=M}^{n} {n \choose i} (1-p)^{i} p^{n-i} \right)^k \Biggr\} ,
\]
where $p=\max\{p_i\}$ and $k$ is the number of receivers with packet
loss probability $p_i = p$.

\begin{axiom}\label{axiom0}
For a network with specified $M$, $k$ and $p$ any network coding
scheme which can transmit $\mathcal {G}(p,k,M)$ innovative packets
in $\mathcal {G}(p,k,M)$ transmissions is considered an optimal
coding scheme.
\end{axiom}

\section{Proposed Coding Scheme} \label{sect:proposed}
We first illustrate the practical usefulness of our proposed coding
scheme by the aid of a simple example. Consider the case of $R_0$
multicasting packets $c_1$ and $c_2$ to 4 receivers, $R_1,...,R_4$.
After the first 2 transmissions, only $R_1$ receives $c_1$ and only
$R_2$ receives $c_2$. Now to transmit a packet which is innovative
for all the receivers, the only possibility is to transmit
$c_1\oplus c_2$, when coding is limited to $GF(2)$. However let us
assume that only $R_3$ receives this coded packet. Given the
constraint of coding over $GF(2)$ it is easy to verify that after
these 3 transmissions, there is no possibility to transmit a packet
which will be innovative for all the receivers. This is also
consistent with previous theoretical analysis~\cite{KChi10,Kwan11},
which have proven that an innovative packet transmission, for every
transmission is only possible on the condition that the field size
is larger than or equal to the number of users.

In our proposed coding scheme we go around this information theory
limitation, by adding redundant bits to $c_1$ and $c_2$, then
encoding these packets over $GF(2)$, and including information about
these redundant bits added to each packet in the packet header of
the encoded packet. For the given example, we add bit `0' at the
head of data payload of packet $c_1$. To equalize the length of both
the packets we add bit `0' at the tail of the data payload of packet
$c_2$, and then encode these packet over $GF(2)$. Let us assume that
all packets $c_m\in P$, $1\leq m\leq M$, have equal data payload
length of $B$ bits. The binary bit pattern of the data payload for
packet $c_m$ can be represented as $(b_{1,m},b_{2,m},...,b_{B,m}),
b_{j,m}\in\{0,1\}$. Hence the bit pattern of $c_1$ with one
redundant bit added at the head of the packet is given as
$(0,b_{1,1},...,b_{B,1}),$ and that of $c_2$ as
$(b_{1,2},...,b_{B,2},0)$. We denote such modified packet as
$c_{m,r_m}$, where $r_m\in \mathbb{N}_0$, is the number of `0's
added at the head of the data payload of packet $c_m$. The new
encoded packet is denoted as $c_{1,1}\oplus c_{2,0}$. The packet
header will include information about $r_m$ for each packet used for
encoding, and we will study the overhead cost in subsequent section.

This new encoded packet $c_{1,1}\oplus c_{2,0}$ will be innovative
for all the 4 receivers. If $R_3$ receives this packet, it will have
the information about bit $b_{1,2}$ as bit `0' was added as a
redundant bit in packet $c_1$ from the packet header. Since $R_3$
also has packet $c_1\oplus c_2$, it now has information about bit
$b_{1,2}$ and $b_{1,1}\oplus b_{1,2}$. Using this information it can
decode bit $b_{1,1}$. Bit $b_{1,1}$ is then substituted in
$b_{1,1}\oplus b_{2,2}$, from the encoded packet $c_{1,1}\oplus
c_{2,0}$, to obtain bit $b_{2,2}$. Therefore using this bit-by-bit
simple back substitution method, $R_3$ can decode all the bits of
packet $c_1$ and $c_2$.

\subsection{Generation of Innovative Packets}
For a system such the total number of redundant bits added is
$r_{max}$ to a packet of length $B$ bits, then we note that
generating $c_{m,r_m}$ by adding of redundant bits to $c_m$ is
equivalent to the following operation
\[
c_{m,r_m} = 2^{r_{max}-r_m} c_m.
\]

An encoded packet $c_{1,r_1} \oplus c_{2,r_2} \oplus \cdots \oplus
c_{M,r_M}$ is said to be innovative to a receiver if the encoded
packet is linearly independent with respect to all other encoded
packets that the receiver already possesses. Our aim is find a
sequence of coefficient sets such that an encoded packet with a
coefficient set is always linearly independent to encoded packets of
any collection of other coefficient sets, and secondly, the decoding
procedure can be solved using the back substitution method.

Assuming encoding of $M$ packets, for an encoded packet $y_a =
c_{1,r_1} \oplus c_{2,r_2} \oplus \ldots \oplus c_{M,r_M}$, the
total number of redundant bits added to each packet $c_m$ is
$r_{max} = \max\{r_1,r_2,...,r_M\}$. We represent the \emph{unique
id} of encoded packet $y_a$ as $(r_1,r_2,...,r_M)_a$. The encoded
packet can be written as
\[
2^{r_{max}-r_1} c_1 \oplus 2^{r_{max}-r_2} c_2 \oplus \ldots \oplus
2^{r_{max}-r_m} c_m.
\]

We shall propose a natural number based triangular substitution
based network coding. The Gaussian elimination method consists of 2
major steps, the matrix triangularization step and the back
substitution step. The matrix triangularization step accounts for
$\mathcal{O}(M^3)$ steps, while the back substitution step requires
only $\mathcal{O}(M^2)$ steps~\cite{Fraleigh87}. Therefore our aim
is to design a coding scheme such that the receiver does not have to
perform the triangularization step. In other words, with our
proposed coding scheme, the encoding packets are linearly
independent and back substitution ready.

If packet encoding is performed such that for an encoded packet,
$y_a$ with an id of $(r_1,r_2,...,r_M)_a$, $r_m$ is mapped to an
element of the set of natural number sequence given as
$\{0,1,...,(M-1)\}$ using a bijective function, then a collection of
encoded packets can easily form a triangular augmented matrix.
Without loss of generality, assume that the first encoded packet
(also called the initial pattern) has an id given as
$(0,1,2,3,...,M-1)_1$. Then to generate the next encoded packet, we
anchor `0' at its position and rotate all other terms rightward
which results in $(0,M-1,1,2,...,M-2)_2$. Similarly the consecutive
encoded packets in the series are generated as
$(0,M-2,M-1,1,...,M-3)_3$, $(0,M-3,M-2,M-1,...,M-4)_4$, ...,
$(0,2,3,4,...,M-1,1)_{M-1}$. We call all encoded packets in this
series, with the `0' anchored in one position, to form a
\emph{group}. We may now anchor `0' at the second position starting
with an id $(1,0,2,3,...,M-1)_M$ for the next group. By rotating all
other terms except `0', we get another group of $M-1$ encoded
packets. Given $M$ positions for `0' to anchor, we yield altogether
$M$ groups of $M-1$ encoded packets with the very last encoded
packet being $(2,3,4,...,M-1,1,0)_{M(M-1)}$.

But what if the transmitter needs to generate additional encoded
packets? In such case, after the $M(M-1)$ permutations has been
exhausted, the transmitter may start with another initial pattern of
$(0,2,4,6,...,2M-2)_{M(M-1)+1}$ which is derived from the earlier
initial pattern with each $r_m$ multiplied by a constant $\alpha$.
Let $\alpha=2$, and we call this collection of encoded packets the
second \emph{round}. Similarly, the transmitter can generate another
$M(M-1)$ encoded packets for the third round by setting $\alpha=3$.
Theoretically, we may continue with $\alpha=3,4,5,...$ without a
limit, only that a higher value of $\alpha$ results in a larger
value of $r_{max}$ which implies more redundant bits for the
encoding.

In the following, we shall show that our proposed coding scheme
always generate innovative packets. In other words, the coefficient
matrix of any given $n$ encoded packets where $n\leq M$ gives a rank
of $n$. While our proposed coding scheme mixes standard arithmetic
(i.e. multiplication by shifting) and finite field arithmetic (i.e.
exclusive-or), use of standard arithmetic is sufficient to prove
that every generated packet is innovative.

\begin{lemma} \label{lemmma:samegroup}
Consider a system of $M$ packets. Given an encoded packet $y_a$ with
an id of $(0,1,2,3,...,M-1)_a$, the coefficient matrix formed by all
$M-1$ encoded packets in the same group of $y_a$ including $y_a$
gives a rank of $M-1$.
\end{lemma}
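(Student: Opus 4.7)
The plan is to recast each encoded packet in the group as a $GF(2)[x]$-linear combination of the original data packets (with $x$ standing for the one-bit shift operator, i.e.\ multiplication by the integer $2$ at the bit level), and then rule out any nontrivial $GF(2)$-dependence among the rows by exploiting a Latin-square structure that the cyclic-rotation generation rule automatically produces. Concretely, each $y_a$ with id $(r_1^{(a)},\ldots,r_M^{(a)})$ equals $\bigoplus_m x^{r_{max}-r_m^{(a)}} c_m$, and within this group $r_1^{(a)}=0$ for every $a$ while $r_{max}=M-1$. The crucial observation is that, because $(r_2^{(a)},\ldots,r_M^{(a)})$ ranges over all $M-1$ cyclic rotations of $(1,2,\ldots,M-1)$, for every fixed column index $m\geq 2$ the values $\{r_m^{(a)}:a=1,\ldots,M-1\}$ are exactly $\{1,2,\ldots,M-1\}$ with no repetition; equivalently, the exponents $\{(M-1)-r_m^{(a)}\}_a$ form a permutation of $\{0,1,\ldots,M-2\}$.

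With that setup, I would assume for contradiction that some nonempty $S\subseteq\{1,\ldots,M-1\}$ satisfies $\bigoplus_{a\in S} y_a = 0$. Collecting coefficients of each $c_m$ forces $\bigoplus_{a\in S} x^{(M-1)-r_m^{(a)}} = 0$ in $GF(2)[x]$ for every $m$, since the packets $c_m$ may be treated as generic independent polynomial inputs. For any $m\geq 2$ the Latin-square property above makes the exponents on the left-hand side $|S|$ pairwise distinct elements of $\{0,1,\ldots,M-2\}$, so that sum is a nonzero sum of distinct monomials in $GF(2)[x]$, a contradiction. Hence the $M-1$ rows of the coefficient matrix are linearly independent and its rank is $M-1$.

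The main obstacle is conceptual rather than computational: recognizing that the mixed XOR-plus-shift encoding can be read as scalar multiplication in $GF(2)[x]$, and then spotting that the cyclic-rotation rule forces the Latin-square property on columns $2,\ldots,M$. Once those two facts are in place the argument closes immediately using only the fact that $\{1,x,x^2,\ldots\}$ is a $GF(2)$-basis of $GF(2)[x]$, so no casework, determinant expansion, or invocation of circulant-matrix eigenvalue formulas is needed.
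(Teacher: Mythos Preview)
Your proof is correct, and it takes a genuinely different route from the paper's. The paper works over the ordinary integers: it writes the $M-1$ rows with coefficients $2^{(M-1)-r_m}$, then for each cyclically consecutive pair replaces the pair by $2\cdot(\text{upper row})-(\text{lower row})$, which collapses every difference to the two-term form $2^{M-1}c_1+(2^{M-1}-1)c_j$ with a distinct $j$ in each row; rank $M-1$ is then read off directly from this reduced shape. Your argument instead stays in $GF(2)[x]$, never row-reduces, and kills any putative $GF(2)$-dependence $\bigoplus_{a\in S}y_a=0$ by looking at a single column $m\ge 2$: the cyclic-rotation rule makes the exponents $(M-1)-r_m^{(a)}$ pairwise distinct across the group, so the $c_m$-coefficient is a sum of $|S|$ distinct monomials and cannot vanish. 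Your approach is arguably closer to the XOR-plus-shift semantics the scheme actually uses, and it sidesteps any need to justify why an integer-rank computation transfers to the $GF(2)$ setting. The paper's approach, on the other hand, produces the explicit reduced rows $2^{M-1}c_1+(2^{M-1}-1)c_j$ that are reused verbatim in the determinant calculation of Lemma~\ref{lemmma:diffgroup}, so its extra work pays off immediately in the next step; if you continue with your method you will want an analogous $GF(2)[x]$ argument there rather than the paper's integer determinant condition~\eqref{eqn:cond}.
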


\begin{proof}
We first list all the ids of $M-1$ encoded packets in the same group
of $y_a$ including $y_a$ as follows where
\[
\begin{array}{ccccl}
(0,&1,  &2,  &3,...,&M-1)_a \\
(0,&M-1,&1,  &2,...,&M-2)_{a+1} \\
(0,&M-2,&M-1,&1,...,&M-3)_{a+2} \\
   &    &\vdots           \\
(0,&2,  &3,  &4,...,&1)_{a+M-2}. \\
\end{array}
\]

From the id of $y_a$, we know that $r_{max}=M-1$ and $\alpha=1$. The
above list of encoded packets can be expressed by
\[
\begin{array}{rrrrrrr}
2^{M-1} c_1 &+& 2^{M-2} c_2 &+& 2^{M-3} c_3 &+ \cdots &+ 2^{0} c_M\\
2^{M-1} c_1 &+&   2^{0} c_2 &+& 2^{M-2} c_3 &+ \cdots &+ 2^{1} c_M\\
2^{M-1} c_1 &+&   2^{1} c_2 &+& 2^{0}   c_3 &+ \cdots &+ 2^{2} c_M\\
\vdots& \ddots &\vdots \\
2^{M-1} c_1 &+& 2^{M-3} c_2 &+& 2^{M-4} c_3 &+ \cdots &+ 2^{M-2} c_M.\\
\end{array}
\]

For each pair of consecutive equations above, we multiply the upper
one by $2^\alpha$ and then minus with the lower one. After this
simplification, we get
\[
\begin{array}{rrrr}
2^{M-1} c_1 &+& (2^{M-1}-1) c_2 \\
2^{M-1} c_1 &+& (2^{M-1}-1) c_3 \\
2^{M-1} c_1 &+& (2^{M-1}-1) c_4 \\
            &\vdots \\
2^{M-1} c_1 &+& (2^{M-1}-1) c_M \\
\end{array}
\]
which clearly gives a rank of $M-1$.

\vspace{-3mm}
\end{proof}

For other groups, we may reorder the coefficients such that $r_1=0$
and then repeat Lemma~\ref{lemmma:samegroup} to show that other
groups possess the same property. Furthermore,
Lemma~\ref{lemmma:samegroup} can be easily extended to the case of
$\alpha=2,3,...$ which applies to a group belonging to higher
rounds.

\begin{lemma} \label{lemmma:diffgroup}
Consider a system of $M$ packets. A collection of all $M-1$ encoded
packets in the same group and an encoded packet from a different
group gives a rank of $M$.
\end{lemma}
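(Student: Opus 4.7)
The plan is to build on Lemma~\ref{lemmma:samegroup} by identifying the single linear equation that cuts out its $(M-1)$-dimensional row span, and then verifying that the coefficient vector of the additional packet fails that equation.

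First, after a WLOG relabelling that places the shared group's anchored `0' at position~1, the elimination from Lemma~\ref{lemmma:samegroup} reduces the $M-1$ rows to $v_j = 2^{M-1} c_1 + (2^{M-1}-1)\, c_j$ for $j=2,\ldots,M$. Solving $\sum_{j=2}^M \lambda_j v_j = (a_1,\ldots,a_M)$ for $\lambda_j$ and checking consistency, a coefficient vector $(a_1,\ldots,a_M)$ lies in $\mathrm{span}\{v_2,\ldots,v_M\}$ if and only if
\[
(2^M-1)\, a_1 \;=\; 2^{M-1}\sum_{i=1}^{M} a_i.
\]

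I would then evaluate this test on the extra packet. If it comes from a different group of the same round ($\alpha=1$), its coefficients are a permutation of $\{2^0,2^1,\ldots,2^{M-1}\}$ whose sum is $2^M-1$, so the condition collapses to $a_1=2^{M-1}$; but the packet's `0' is anchored at some position $k\neq 1$, forcing $a_1$ to be a strictly smaller power of~$2$, and the equation fails. For a packet from a higher round ($\alpha\geq 2$), the coefficients form a permutation of $\{2^0,2^\alpha,\ldots,2^{(M-1)\alpha}\}$ with sum $(2^{M\alpha}-1)/(2^\alpha-1)$, and I would rule out the candidate equation by a short $2$-adic valuation argument combined with the cyclotomic factorisation of $2^{M\alpha}-1$.

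The main obstacle I anticipate is the higher-round case: the interaction between $(2^M-1)$, $(2^\alpha-1)$, and the geometric-sum expression for $\sum_i a_i$ has to be handled by a number-theoretic observation about powers of~$2$ modulo odd Mersenne-type factors, rather than by linear algebra alone. Once both cases are dispensed with, the extra packet is not in the row span of the other $M-1$ packets, so adjoining it raises the rank from $M-1$ to~$M$.
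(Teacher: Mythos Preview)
Your approach is essentially the paper's. Both reduce the group via Lemma~\ref{lemmma:samegroup} to the vectors $2^{M-1}c_1+(2^{M-1}-1)c_j$, derive the same non-membership condition (your $(2^M-1)a_1=2^{M-1}\sum_i a_i$ is algebraically identical to the paper's determinant condition $2^{M-1}\sum_{i\ge 2}\lambda_i=(2^{M-1}-1)\lambda_1$), and then observe that a packet from another group has its anchored `0' elsewhere, forcing $a_1\neq 2^{M-1}$. The only cosmetic difference is that the paper finishes with an inequality ($\sum_{i\ge 2}\lambda_i>\lambda_1$ once $\lambda_1\neq 2^{M-1}$) whereas you compute the total sum exactly; both work.

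One remark on the higher-round case: the paper does not treat it inside the proof of Lemma~\ref{lemmma:diffgroup} at all, merely noting afterward that ``a quick test'' using~\eqref{eqn:cond} covers it. Your plan to pin down the unique candidate $a_1$ by $2$-adic valuation is sound, but invoking cyclotomic factorisation is heavier than needed. Once valuation forces $a_1=2^{M-1}$ (when $\alpha\mid M-1$; otherwise you are already done), the residual equation reads $2^M-1=(2^{M\alpha}-1)/(2^\alpha-1)=1+2^\alpha+\cdots+2^{(M-1)\alpha}$, and for $\alpha\ge 2$, $M\ge 2$ the right-hand side is strictly larger than the left by a plain size comparison. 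That elementary inequality replaces the cyclotomic machinery entirely.
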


\begin{proof}
We shall prove by calculating the determinant of the matrix. Without
loss of generality, we consider a group, $\mathbf{G}$, of encoded
packets starting with the initial pattern of $(0,1,2,3,...,M-1)_1$.
Based on Lemma~\ref{lemmma:samegroup}, after simplification, these
encoded packets can be represented by
\[
\begin{array}{rrrr}
2^{M-1} c_1 &+& (2^{M-1}-1) c_M \\
2^{M-1} c_1 &+& (2^{M-1}-1) c_{M-1} \\
2^{M-1} c_1 &+& (2^{M-1}-1) c_{M-2} \\
            &\vdots \\
2^{M-1} c_1 &+& (2^{M-1}-1) c_2 \\
\end{array}
\]

Let us consider an arbitrary encoded packet, $y_a$, with the
following representation
\[
y_a = \lambda_1 c_1 + \lambda_2 c_2 + \lambda_3 c_3 + \cdots +
\lambda_M c_M.
\]
Combining the last two results, we form a coefficient matrix and
express its determinant, $d$ as
\[
\left|
\begin{array}{ccccccc}
2^{M-1} & 0 & 0 & ... & 2^{M-1}-1 \\
&&\vdots \\
2^{M-1} & 0 & 2^{M-1}-1 & ... & 0 \\
2^{M-1} & 2^{M-1}-1 & 0 & ... & 0 \\
\lambda_1 & \lambda_2 & \lambda_3 & ... & \lambda_M \\
\end{array}
\right|.
\]

The condition for $y_a$ to be linearly independent of the encoded packets from
group $\mathbf{G}$ is that $d$ produces a none zero value. After some algebraic
manipulation on $d$ and applying the condition where $d\neq0$, we yield the
following condition where
\begin{equation}\label{eqn:cond}
2^{M-1} \sum_{i=2}^M \lambda_i \neq (2^{M-1}-1) \lambda_1.
\end{equation}

Our coding scheme demands the one-to-one mapping between
$\{r_1,r_2,...,r_M\}$ and $\{0,1,...,(M-1)\}$. In other words,
$\{\lambda_1,\lambda_2,...,\lambda_M\}$ are mapped one-to-one to
$\{2^{(M-1)},2^{(M-2)},...,2^0\}$. The only case that
\eqref{eqn:cond} cannot be met occurs when $\lambda_1=2^{(M-1)}$. In
other cases, we always have
\[
\sum_{i=2}^M \lambda_i > \lambda_1 > 0
\]
and since $2^{M-1}>(2^{M-1}-1)>0$, condition \eqref{eqn:cond} is
always met.

The setting of $\lambda_1 \neq 2^{(M-1)}$ requires that $r_1 \neq 0$ for $y_a$
of id $(r_1,r_2,...,r_M)$. As $y_a$ is drawn from a group other than group
$\mathbf{G}$, its `0' is anchored at a different position rather than at $r_1$
like those in group $\mathbf{G}$, thus we have $r_1 \neq 0$, and hence $y_a$ is
linearly independent of the encoded packets from group $\mathbf{G}$.

\vspace{-3mm}
\end{proof}

Applying a quick test using \eqref{eqn:cond} in
Lemma~\ref{lemmma:diffgroup} shows that based on our coding scheme,
a pick of an encoded packet even from a group of other rounds
satisfies the condition. Consequently, with the above lemmas, we
established that every encoded packet is an innovative packet.

\subsection{Packet Decoding}
\begin{figure}
\begin{center}
\includegraphics[width = 0.5\textwidth]{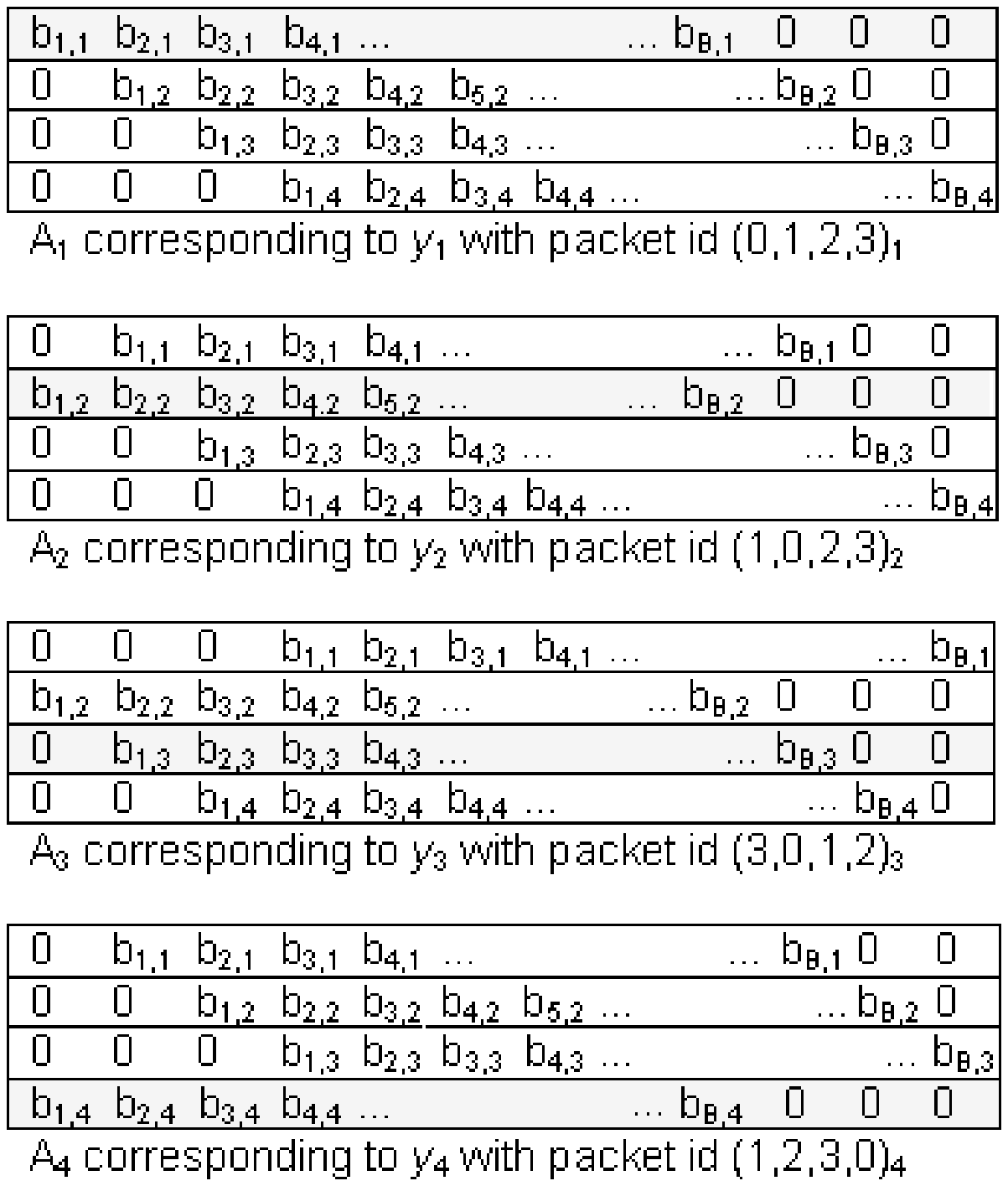}
\end{center}
\caption{An illustrating example to show how packet decoding is done
on triangular network coding scheme. Highlighted $b_{j,m}$ represent
calculated bit value from an equation with only 1 unknown variable,
whereas non-highlighted $b_{j,m}$ represent substituted bit value.
Bits `0' are known bit values from the packet's header. Packet $y_2$
and $y_3$ are from the same group.} \label{fig:packets}
\end{figure}

We now show that such a coding scheme can be solved by substitution
method. We illustrate the concept using an example depicted in
Fig.~\ref{fig:packets}. In our example, we consider a set of 4
original packets, that is $M=4$. After transmissions of several
encoded packets, we consider that a receiver has successfully
collected 4 encoded packets, $y_1$, $y_2$, $y_3$ and $y_4$ with the
packet ids of $(0,1,2,3)_1$, $(1,0,2,3)_2$, $(3,0,1,2)_3$ and
$(1,2,3,0)_4$ respectively. In Figure~\ref{fig:packets}, each table
represents an encoded packet where each row lists the bits of an
original packet involved in the encoding. It can be seen that the
first bit of $y_1$ counting from the left is encoded by $b_{1,1}
\oplus 0 \oplus 0 \oplus 0$ which equals $b_{1,1}$. Similarly,
$b_{1,2}$ and $b_{1,4}$ can be obtained from the first bits of $y_2$
and $y_4$ respectively. Now, the decoder may proceed to the second
bit position of the 4 packets. Substituting $b_{1,1}$, $b_{1,2}$ and
$b_{1,4}$ into the $(M-1)$ matrices, $b_{2,1}$, $b_{2,2}$, $b_{2,4}$
and $b_{1,3}$ can be obtained immediately. Moving on to the bit
position, bits $b_{3,1}$, $b_{3,2}$, $b_{2,3}$ and $b_{3,4}$ can be
immediately solved by substitution. The process can continue further
until all unknown bits are solved. By this way, a receiver can
decode the bits of all 4 packets through the simple back
substitution at the bit level.

\section{Performance Evaluation}\label{sec:performance}

\subsection{Throughput Performance}
Based on Axiom~\ref{axiom0}, if for a network with specified $M$,
$k$ and $p$, our proposed coding scheme can generate at least
$\mathcal {G}(p,k,M)$ innovative packets, then it can be considered
as an optimal coding scheme. It is easy to verify that for

\begin{equation}\label{eq:inequality0}
\alpha (M^2-M)  \geq \mathcal {G}(p,k,M)
\end{equation}
there always $\exists\alpha$ such that the above inequality is
satisfied. As we will show later in Section~\ref{sect:results}, for
most practical setting (generally corresponding to $M\geq 10$)
$\alpha=1$ is sufficient to guarantee at least $\mathcal {G}(p,k,M)$
innovative packets. For smaller $M$, a small value of $\alpha$ is
sufficient, and therefore our proposed coding scheme does not have
significantly high packet overhead. In practice, a very small value
of $M$ is often not desirable as it does not provide sufficient pool
of packets to utilize the benefit of network coding. We formally
study the packet overhead cost of our proposed coding scheme in
subsequent subsection.

The throughput performance of DLNC is optimal when $q\geq \lceil
\log_2 N\rceil$. However in the paper~\cite{KChi10}, the optimal
bound and DGC (Dynamic General Coding)\footnote{DGC is the name of a
DLNC based algorithm demonstrated in~\cite{KChi10}.} throughput
performance results do not match. This is explained as follow. The
throughput bound the authors consider is not tight, however since
the DGC algorithm claims to be able to always find coding
coefficients which is linearly independent for all the receivers,
implies that the DGC is an optimal coding scheme.

Assuming fixed $N$, sparse LNC can be solved in polynomial time to
generate sparse and linearly independent coding coefficients,
otherwise optimal throughput for sparse LNC is NP-complete. However
since solving SPARSITY for a fixed network size is a special case of
the SPARSITY problem, therefore in the general sense the throughput
optimality of sparse LNC is considered NP-complete.

\subsection{Packet Overhead}

For LNC, the packet header needs to include information about
encoding coefficient $g_m$ used to multiply with $c_m$. If $g_m\in
GF(2^q)$, the number of bits required to represent $g_m$ is given as
$\log_2 2^q$. Since there are $M$ such encoding coefficients, the
total packet overhead of RLNC is given as $M \log_2 2^q$, which
reduces to $Mq$ bits. For an optimal DLNC and sparse coding, $q$ is
bounded by $q\geq \lceil \log_2 N\rceil$~\cite{KChi10,Sung11},
therefore the total packet overhead for an optimal DLNC and sparse
coding scheme is given as $M\lceil\log_2 N\rceil$.

For our proposed coding scheme, the packet overhead is given by
$r_{max}$ redundant bits added to every packet, and $M \lceil \log_2
r_{max}\rceil$ bits to store the encoded packet unique id
$(r_1,...,r_M)_a$ in the encoded packet header. Therefore the total
packet overhead of our proposed coding scheme is given as $r_{max}+M
\lceil \log_2 r_{max}\rceil$ bits, where $r_{max}=\alpha(M-1)$, and
$\alpha$ corresponds to the smallest value required to satisfy
inequality \eqref{eq:inequality0}. For $\alpha=1$ (which satisfies
inequality~\eqref{eq:inequality0} when $M$ is reasonably large), the
total packet overhead is $M-1+M\lceil\log_2(M-1)\rceil$ or simply
$M+M\lceil\log_2M\rceil$ if we ignore the insignificant constant
term of minus one.

We would like to further point out that the packet overhead of
triangular network coding can be reduced to approximately $M$ bits,
which is the number of redundant bits added to each packet. The only
information which needs to be included in the packet header is the
packet batch size, the group index and packet index. Since
triangular network coded packets' id follow a sequence, with these
information in the packet header, the receiver can reconstruct the
packet id of the coded packet.

However for our coding comparison we assume the packet overhead of
$M+M\lceil\log_2M\rceil$ for triangular network coding to provide
fair comparison with RLNC packet overhead, since assuming that both
the transmitter and receiver have the same random number generator
(RNG), all what a RLNC coded packet then needs to include in the
packet header is the seed and packet index. Using the seed and
packet index, the receiver can then regenerate the random coding
coefficients.

\subsection{Computational Complexity}
We evaluate three distinct computational complexity. The
\emph{algorithm complexity} refers to the complexity of generating
the coding coefficients for $M$ coded packets. The total algorithm
complexity is given as $M$ times the complexity of finding $M$
coding coefficient for each coded packet. The \emph{encoding
complexity} refers to the complexity of encoding $M$ coded packets.
Whereas the \emph{decoding complexity} signifies the complexity of
decoding $M$ innovative coded packets by a receiver.

We evaluate the computational complexity for $M$ coded packets for
uniformity in comparison, though in practise the total algorithm and
encoding complexity would also be dependent on the throughput
optimality of the coding scheme. Optimal coding scheme will need to
run the algorithm and perform coding $\mathcal {G}(p,k,M)$ times
only, whereas suboptimal coding schemes will need to run the
algorithm and perform coding $\mathcal {G}(p,k,M)+\mathcal {C}$
times, where $\mathcal {C}$ is a coding scheme dependent arbitrary
constant.

The decoding complexity is more significant than the algorithm and
encoding complexity. This is because the algorithm and encoding
computation cost is borne by a single transmitter which is often not
a battery constrained device such as an AP, whereas the decoding
cost is borne by $N$ receivers, which are often battery and
processor constrained such as smartphones.

We evaluate the encoding and decoding complexity at bit level, so
that this way we can distinguish between the complexity of
multiplication used in LNC and addition used in XOR coding. For two
non-negative integers $a$ and $b$, the multiplication and addition
complexity is given as $\mathcal {O}((\log_2 a) (\log_2 b))$ and
$\mathcal {O}(\log_2 a+\log_2 b)$ respectively~\cite{Menezes96}.
While there exist different optimization algorithms to reduce the
computational complexity of multiplication, such optimization survey
is beyond the scope of the paper.

\subsubsection{Traditional XOR}
The encoding complexity of a traditional XOR packet encoding of $m$
packets\footnote{XOR coding is often performed on a subset of
packets from $P$~\cite{Katti06,Rozner07}, where $|P|=M$, the
cardinality of the subset of packets over which encoding is
performed varies for each encoded packet and is algorithm
dependent.} is given as $\mathcal {O}(mB)$, where $m\leq M$. XOR
encoding is performed bit-by-bit. For each bit ($b_{j,m}$) position
it needs to evaluate the value of $m$ bits in location $j$.
Therefore the average computational complexity to generate $M$ coded
packets is given as $\mathcal {O}(MmB)$. Similarly XOR-encoded
packet can be decoded by performing the XOR operation on a set of
packets, which has complexity given as $\mathcal {O}(mB)$. To decode
$M$ encoded packets, the average computational complexity is given
as $\mathcal {O}(MmB)$.

For the algorithm complexity we consider the sort-by-utility
algorithm~\cite{Rozner07} for our evaluation, which has complexity
given as $\mathcal {O}(M^2\log_2 M)$. We have chosen the
sort-by-utility algorithm, as it has been shown to be an efficient
XOR coding algorithm with respect to other known XOR coding
algorithms in~\cite{Rozner07}.

\subsubsection{RLNC}
RLNC needs to perform random number generation, multiplication and
addition during encoding. Generating a random number has constant
time complexity of $G$. Therefore the total algorithm complexity is
given as $\mathcal {O}(M^2G)$. The operation $g_m\cdot c_m$ has
complexity given as $\mathcal {O} (qB)$. To generate a coded packet
$M$ such operations needs to be performed, hence the complexity to
generate a coded packet is given as $\mathcal {O}(MqB)$. Therefore
the total encoding complexity to generate $M$ coded packets is given
as $\mathcal {O}(M^2qB)$.

For decoding, RLNC needs to perform Gaussian elimination with
complexity given as $\mathcal {O}(M^3)$, and then multiplication and
addition of $M$ coded packet with the inverse matrix. Therefore the
total decoding complexity is given as $\mathcal {O}(M^2qB+M^3)$.

\subsubsection{DLNC}
The only known DLNC algorithm is the DGC algorithm given
in~\cite{KChi10}. DGC has algorithm complexity given as $\mathcal
{O}(N^2M^3)$ for one coded packet. Its encoding and decoding
complexity is the same as that of RLNC.

\subsubsection{Sparse LNC}
Sparse LNC is a special case of DLNC. Unlike DLNC, where the main
objective is to find coding coefficients which are linearly
independent from received coding coefficients by all the receivers,
in sparse LNC, the objective is not only to find linearly
independent coding coefficients, but also coefficients which are
sparse. The algorithm complexity for sparse LNC is given as
$\mathcal {O}(M^N(MN^2))$, which reduces to $\mathcal {O}(M^NN^2)$.
Since there are large number of `0' coding coefficients in sparse
LNC we can ignore the computational cost of $g_m\cdot c_m$ when
$g_m=0$. Hence the coding coefficient is given as $\mathcal
{O}(Mq\omega B)$, where $\omega$ is the number of non-zero
components, $\omega\leq M$. The decoding complexity is given as
$\mathcal {O}(M^2qB+\omega M^2)$.

\subsubsection{Triangular Network Coding}
The number of flops (or number of steps) for our proposed triangular
network coding scheme is given as follow. Generation of unique
packet ids for packet encoding has constant time complexity, as
these packet id follow a natural number based sequence. The
algorithm to generate the coefficient matrix for $M$ packets has
complexity of $\mathcal {O}(M^2)$. For encoding complexity, the
encoding of each packet has complexity of $\mathcal {O}(MB)$, since
there are $B$ bits from $M$ packets, and XOR addition is the only
required operation. The redundant `0' bits can be ignored both
during encoding and decoding. Therefore to generate $M$ such encoded
packet, the overall complexity is given by $\mathcal {O}(M^2B)$.

Once a receiver has $M$ innovative encoded packets, then the
complexity to decode these $M$ packets is given as $\mathcal
{O}(M^2B)$. We elaborate on the back-substitution process. $M$ flops
are required to find the solution of each bit (XOR addition), the
solution for this equation is then substituted in $M-1$ matrices.
Since there are $M$ such equations the total number of flops is
given as $M(2M-1)$. Hence the decoding computational complexity for
each bit location is given as $\mathcal {O}(M^2)$, which is also
consistent with the results shown in~\cite{Fraleigh87}.

Our proposed coding scheme results in triangular pattern both at the
start and end of the packet (See Figure~\ref{fig:packets} for
illustration), and therefore for a multiprocessor system, it is
possible that one processor decodes the first half part of the
packets from the start, while simultaneously another processor
decodes the second half part from the end of the packet (reverse
direction) using the substitution method. This can cut down the
decoding time by half in a multiprocessor system.

\subsection{Packet Feedback Requirement}
We further note that traditional XOR, DLNC and sparse LNC require
feedback information from all receivers to decide what coefficient
sets to be used for further encoding of packets such that the
encoded packets can be innovative to all receivers. In this aspect,
RLNC and our proposed scheme do not need the feedback information.
For RLNC, it simply generates encoded packets using random numbers.
For our proposed scheme, the transmitter can simply continue the
sequence of coefficient sets as the generated encoded packets will
always be innovative.

Collecting packet feedback from $N$ receivers over an erasure
channel induces a very large overhead. For wireless networks, the
IEEE 802.11-2007 standard does not provide any MAC level reliability
for multicast transmissions. Therefore deterministic network coding
schemes are often based on an impractical arbitrary assumption that
the transmitter has perfect feedback information for all transmitted
packets.

\section{Numerical Results}\label{sect:results}
\begin{figure}
\begin{center}
\includegraphics[width = 0.5\textwidth]{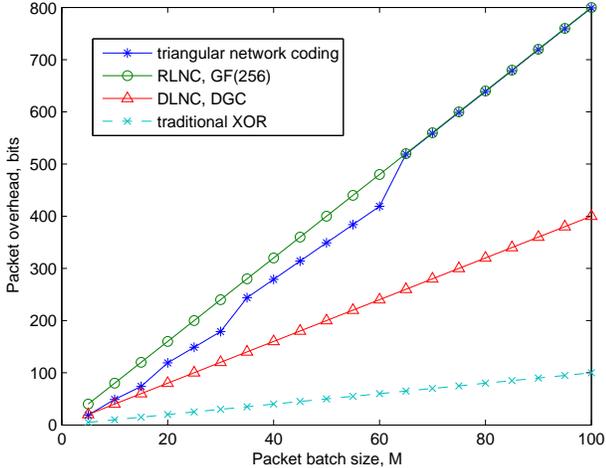}
\end{center}
\caption{Packet overhead cost for N=10 and p=0.3.}
\label{fig:result1}
\end{figure}

\begin{figure}
\begin{center}
\includegraphics[width = 0.5\textwidth]{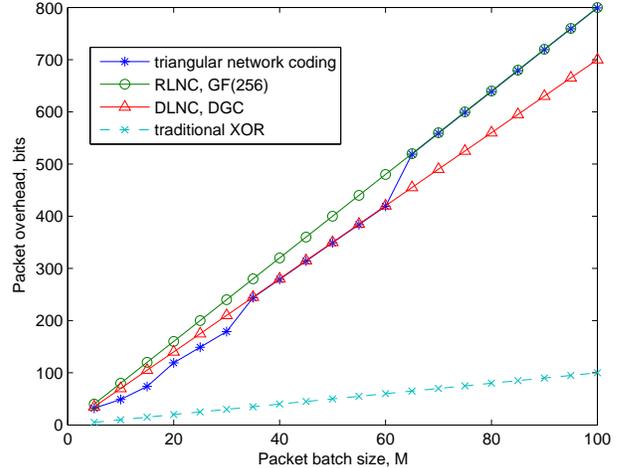}
\end{center}
\caption{Packet overhead cost for N=100 and p=0.8.}
\label{fig:result3}
\end{figure}

We now study the packet overhead corresponding to the smallest
$\alpha$ necessary to satisfy inequality \eqref{eq:inequality0}, and
compare the results with the packet overhead of RLNC. Without loss
of generality, we assume that $p_1=...=p_N=p$, such that $k=N$.
Unlike the packet overhead of triangular coding, which is a
monotonic function with respect to $M$, RLNC and DLNC packet
overhead cost are a function of two variables.

For RLNC there is no standardized field size suggested, though an
overwhelming majority of works have considered a field size of
$GF(2^8)$ for RLNC as being sufficiently large enough to guarantee
linear independency with very high
probability~\cite{Shojania09,Heide11,Vingelmann10,Sagduyu07}. We
therefore assume $q=8$ for the evaluation of RLNC packet overhead.

The result of packet overhead for various coding scheme for
different $M$, $N$, and $p$ is given in Figs.~\ref{fig:result1} and
\ref{fig:result3}. The results show that for practical setting, the
packet overhead cost of our proposed coding scheme is lower than
that of RLNC. For a large network (see Fig.~\ref{fig:result3}), the
packet overhead cost of DLNC is similar as that of triangular
coding. Even for a large network with a high packet loss probability
as shown in Fig.~\ref{fig:result3}, the packet overhead cost of
triangular network coding is the same as that of a smaller network
with low packet loss probability. In fact the only difference
between the results presented in Fig.~\ref{fig:result1} and
Fig.~\ref{fig:result3} occurs at $M=5$ for triangular network
coding, when $\alpha=2$ is required to satisfy
inequality~\ref{eq:inequality0}, which corresponds to a slightly
higher packet overhead cost.

The sudden increase of packet overhead cost for triangular
substitution from $M=30$ to $M=35$ and $M=60$ to $M=65$ is explained
from the presence of a ceiling logarithm to base two term in the
packet overhead cost of our proposed scheme.

\section{Conclusion}\label{sect:conclusion}

\begin{center}
\begin{table*}[ht]
\caption{Summary of the characteristics of various coding schemes.}
\label{table:summary} {\hfill{}
\begin{tabular}{|c|c|c|c|c|c|}
\hline & RLNC  &  DLNC (DGC)~\cite{KChi10} & Sparse LNC~\cite{Sung11}& Traditional XOR  & Triangular Network Coding\\
\hline Throughput & Suboptimal & Optimal & NP-complete & NP-complete & Optimal\\
\hline Algorithm Complexity & $\mathcal {O}(M^2G)$ & $\mathcal
{O}(N^2M^4)$ & $\mathcal {O}(M^NN^2)$ & $\mathcal
{O}(M^2\log_2 M)$~\cite{Rozner07} & $\mathcal {O}(M^2)$\\
\hline Encoding Complexity & $\mathcal {O}(M^2qB)$ & $\mathcal
{O}(M^2qB)$ & $\mathcal {O}(Mq\omega B)$ & $\mathcal {O}(MmB)$
& $\mathcal {O}(M^2B)$\\
\hline Decoding Complexity & $\mathcal {O}(M^2qB+M^3)$ & $\mathcal
{O}(M^2qB+M^3)$ & $\mathcal {O}(M^2qB+\omega M^2)$ & $\mathcal
{O}(MmB)$ & $\mathcal {O}(M^2B)$\\
\hline Packet Overhead (bits) & $Mq$ & $M\lceil\log_2N\rceil$ &
$M\lceil\log_2N\rceil$ & $M$ & $M+M\lceil\log_2M\rceil$\\
\hline Packet Feedback & Not required & Required & Required &
Required & Not required\\
\hline
\end{tabular}}
\hfill{}
\end{table*}
\end{center}

The characteristics of various coding schemes, including triangular
coding are summarized in Table~\ref{table:summary}. In this work we
have proposed an optimal triangular network coding scheme over
$GF(2)$, and demonstrated its throughput and computation cost merits
over previously proposed network coding schemes. Unlike traditional
linear network coding scheme which requires the use of
multiplication and Gaussian elimination method at the encoder and
decoder respectively, our proposed coding scheme uses XOR addition
and back-substitution method at the encoder and decoder
respectively. This way, rather than decreasing the number of
computation steps, we decrease the \emph{order} of computational
complexity for both encoding and decoding, without compromising on
the throughput performance. Such a coding scheme is throughput
optimal, and its packet overhead increases linearithmic with respect
to $M$, which can be optimized such that the packet overhead
increases linearly with respect to $M$. Most importantly, triangular
network coding performance is independent of the packet feedback
information.

In addition, apart from the practical implementation significance of
our coding scheme, such results are also of interest from an
information theoretic studies perspective, as unlike previous works,
we have shown that it is possible to obtain optimal coding solution
over $GF(2)$ by adding few redundant bits in the packets.

Such a coding scheme would in particular be of interest in energy
and processor constraint devices. For our future work, we would like
to study the feasibility of extending our current coding scheme for
a distributed multi-hop wireless network.

\bibliographystyle{IEEEtran}
\bibliography{IEEEabrv,mainJ}

\end{document}